\titlespacing{\subsection}{0pt}{1ex}{0ex}
\titlespacing{\subsubsection}{0pt}{0.5ex}{0ex}
\title{A simple numerically stable primal-dual algorithm for computing
  Nash-equilibria in sequential games with incomplete information}
\name{Elvis Dohmatob}
\address{Parietal Team, INRIA / CEA, Universit\'e de Paris-Saclay}
\DeclareMathOperator{\proj}{proj}
\DeclareMathOperator{\prox}{prox}
\newtheorem{definition}{Definition}
\newtheorem{theorem}{Theorem}
\newtheorem{remark}{Remark}
\begin{document}

\maketitle
\begin{abstract}
We present a simple primal-dual algorithm for computing approximate
Nash-equilibria in two-person zero-sum sequential games with
incomplete information and perfect recall (like Texas Hold'em
Poker). Our algorithm is numerically stable, performs only basic
iterations (i.e matvec multiplications, clipping, etc., and no calls
to external first-order oracles, no matrix inversions, etc.), and is
applicable to a broad class of two-person zero-sum games including
simultaneous games and sequential games with incomplete information
and perfect recall. The applicability to the latter kind of games is
thanks to the sequence-form representation which allows us to encode
any such game as a matrix game with convex polytopial strategy
profiles. We prove that the number of iterations needed to produce a
Nash-equilibrium with a given precision is inversely proportional to
the precision. As proof-of-concept, we present experimental results on
matrix games on simplexes and Kuhn Poker.
\end{abstract}

\begin{keywords} Nash-equilibrium, sequential games,
  incomplete information, perfect recall, convex optimization
\end{keywords}
\section{Introduction}
\label{sec:intro}
A game-theoretic approach to playing games strategically optimally
consists of computing Nash-equilibria (in fact, approximations thereof)
offline, and playing one's part (an optimal \textit{behavioral}
strategy) of the equilibrium online. This
technique is the driving-force behind solution concepts like CFR
\cite{zinkevich2008regret,lanctot2009monte,Bowling09012015},
$\text{CFR}^{+}$ \cite{tammelin14} and other variants, which
have recently had profound success in Poker. However, solving games
for equilibria remains a mathematical and computational challenge,
especially in sequential games with imperfect information. In this
paper, we propose (our detailed contributions are sketched in
subsection \ref{sec:sketch} below and elaborated in section
\ref{sec:gsp}) a simple primal-dual algorithm for solving for such
equilibria approximately (in a sense to be made precise in Definition
\ref{thm:cool_notion} below).

\subsection{Statement of the problem}
The sequence-form representation for two-person zero-sum games with
incomplete information was introduced in
\cite{koller1992complexity}, and the theory was further developed in
\cite{koller1994fast,von1996efficient,vonequilibrium} where it was
established that for such games, there exist sparse matrices
$A \in \mathbb{R}^{n_1 \times n_2}$, $E_1 \in \mathbb{R}^{l_1 \times
  n_1}$, $E_2 \in \mathbb{R}^{l_2 \times n_2}$, and vectors $e_1 \in
\mathbb{R}^{l_1}, e_2 \in \mathbb{R}^{l_2}$ such that $n_1$, $n_2$,
$l_1$, and $l_2$ are all linear in the size of the game tree (number
of states in the game) and such that Nash-equilibria correspond to
pairs $(x, y)$ of \textit{realization plans} which solve the primal
LCP (Linear Convex Program)
\begin{equation}
  \begin{aligned}
    \underset{(y,p) \in \mathbb{R}^{n_2} \times
     \mathbb{R}^{l_1}}{\text{minimize }}\langle e_1,
    p\rangle \hspace{.5em}\text{
       subject to: } &y \ge 0, E_2y = e_2,\\
    &-Ay + E_1^Tp \geq 0,
  \end{aligned}
  \label{eq:primal_pb}
\end{equation}

and the dual LCP
\begin{equation}
  \begin{aligned}
    \underset{(x,q) \in \mathbb{R}^{n_1} \times
      \mathbb{R}^{l_2}}{\text{maximize }}-\langle e_2,
    q\rangle\hspace{.5em}\text{subject
      to: } &x \ge
    0, E_1x = e_1,\\
    &A^Tx + E_2^Tq \geq 0.
  \end{aligned}
  \label{eq:dual_pb}
\end{equation}
The vectors $p = (p_0, p_1, ..., p_{l_2 - 1}) \in \mathbb{R}^{l_2}$
and $q = (q_0, q_1, ..., q_{l_1 - 1}) \in \mathbb{R}^{l_1}$ are dual
variables. 
$A$ is the \textit{payoff matrix} and each $E_k$ is a matrix whose
entries are $-1$, $0$ or $1$, with exactly 1 entry per row which
equals $-1$ except for the first whose whose first entry is $1$ and all
the others are $0$. Each of the vectors $e_k$ is of the form $(1, 0, ..., 0)$.

The LCPs above have the equivalent saddle-point formulation
\begin{equation}
  \underset{y \in Q_2}{\text{minimize}}\text{ }\underset{x \in
    Q_1}{\text{maximize}}\text{ }\langle x, Ay\rangle,
  \label{eq:gilpin}
\end{equation}
where the compact convex polytope
\begin{equation}
  Q_k := \{z \in \mathbb{R}^{n_k} | z \ge 0, E_kz = e_k\} \subseteq
  [0, 1]^{n_k}
\label{eq:polytope}
\end{equation}
is identified with the strategy profile of player $k$ in the
sequence-form representation. At a feasible point $(y, p, x, q)$ for
the LCPs, the \textit{duality gap} $\tilde{G}(y, p, x, q)$ is
given by\footnote{The first inequality being due to \textit{weak
    duality}.}
\begin{eqnarray}
  \begin{split}
  0 &\le \tilde{G}(y, p, x, q) := \langle e_1, p\rangle - (-\langle
  e_2, q\rangle) = \langle e_1, p\rangle + \langle
  e_2, q\rangle\\
  &= G(x, y) := \mathrm{max}\{\langle u, Ay\rangle - \langle x, Av\rangle |
(u,v) \in Q_1 \times Q_2\}.
\end{split}
  \label{eq:dgap}
\end{eqnarray}

In \eqref{eq:dgap}, the quantity $G(x, y)$ is nothing but the primal-dual
gap for the equivalent saddle-point problem \eqref{eq:gilpin}.
It was shown (see Theorem 3.14 of \cite{vonequilibrium}) that a pair
$(x, y) \in Q_1 \times Q_2$ of realization plans is a solution to the
LCPs \eqref{eq:primal_pb} and \eqref{eq:dual_pb} (i.e is a
Nash-equilibrium for the game)  if and only if there exist vectors $p$
and $q$ such that
\begin{equation}
\begin{split}
\hspace{.25em} -Ay + E_1^Tp \ge 0, &\hspace{.5em}A^Tx + E_2^Tq \ge
0, \hspace{.25em} \langle x, -Ay + E_1^Tp\rangle = 0,\\
&\langle y, A^Tx  + E_2^Tq\rangle = 0.
\end{split}
\label{eq:feasibility}
\end{equation}
Moreover, at equilibria \textit{strong duality} holds and the value
of the game equals $p_0 = -q_0$, i.e the duality gap
$\tilde{G}(y, p, x, q)$ defined in \eqref{eq:dgap} vanishes at
equilibria.

\begin{definition}[\textbf{Nash $\epsilon$-equilibria}]
Given $\epsilon > 0$, a Nash $\epsilon$-equilibrium is
a pair $(x^*, y^*)$ of realization plans for which there exists dual
vectors $p^*$ and $q^*$ for problems \eqref{eq:primal_pb} and
\eqref{eq:dual_pb} such that the duality gap at $(y^*, p^*, x^*, q^*)$
doesn't exceed $\epsilon$. That is,

\begin{equation}
  0 \le \tilde{G}(y^*, p^*, x^*, q^*) \le \epsilon.
\label{eq:approx_pb}
\end{equation}
\label{thm:approx_nash}
\end{definition}

\subsection{A remark concerning matrix games on simplexes}
It should be noted that any matrix $A \in \mathbb{R}^{n_1 \times n_2}$
specifies a matrix game with payoff matrix $A$, for which player $k$'s
strategy profile is a simplex
\begin{eqnarray}
\Delta_{n_k} := \left\{z \in \mathbb{R}^{n_k}|z \ge 0,\sum_j
z_j = 1\right\}.
\end{eqnarray}
This simplex can be written as a compact convex polytope in
the form \eqref{eq:polytope} by taking $E_k := (1, 1, ..., 1) \in
\mathbb{R}^{1 \times n_k}$ and $e_k = 1 \in \mathbb{R}^1$. Thus every
matrix game on simplexes can be seen as a sequential game, and so the
results presented in this manuscript can be trivially applied such
games in particular. For this special sub-class of sequential games,
the duality gap function $G(x,y)$ writes
\begin{eqnarray}
\begin{split}
G(x, y) &=
\mathrm{max}\{\langle u, Ay\rangle - \langle x, Av\rangle | (u,v) \in
\Delta_{n_1} \times \Delta_{n_2}\}\\
&= \underset{0 \le i <
  n_1}{\text{max }}(Ay)_i - \underset{0 \le j < n_2}{\text{min
}}(A^Tx)_j.
\end{split}
\label{eq:mg_pd}
\end{eqnarray}

\subsection{Quick sketch of our contribution}
\label{sec:sketch}
We now give a brief overview of our contributions, which will be made
more elaborate in section \ref{sec:gsp}.
Developing on an alternative notion of approximate equilibria (see
Definition \ref{thm:cool_notion})
homologous to that presented in Definition \ref{thm:approx_nash}, we
device a simple numerically stable primal-dual algorithm that
(Algorithm \ref{Tab:algo}) for computing approximate Nash-equilibria
in sequential two-person zero-sum games with incomplete information and
perfect recall. On, each iteration, the only operations performed by
our algorithm are of the form $A^Tx$, $Ay$, $E_1^Tp$, $E_2^Tq$, and
$(x)_+ := (\max(0, x_j))_j$. We also prove (Theorem \ref{thm:pd}) that
--in an ergodic / Ces\`ario sense-- the number of iterations required
by the algorithm to produce an approximation equilibrium to a
precision $\epsilon$ is $\mathcal{O}(1/\epsilon)$, with explicit
values for the constants involved.

\subsection{Notation and terminology}
\paragraph*{General.} Let $m$ and $n$ be positive integers.
The components of a vector $z \in \mathbb{R}^n$ will be
denoted $z_0$, $z_1$, ..., $z_{n-1}$ (indexing begins from $0$,
not $1$). $\mathbb{R}^{n}_+ := \{z \in \mathbb{R}^{n}\text{ }|\text{ }
z \geq 0\}$ is the nonnegative $n$th \textit{orthant}.
$\|z\|$ denotes the $2$-\textit{norm} of $z$ defined by $\|z\| :=
\sqrt{\langle z, z\rangle}$. 
Given a matrix $A \in \mathbb{R}^{m \times n}$, its \textit{spectral
  norm}, denoted $\|A\|$, is
 defined to be the largest \textit{singular value} of $A$, i.e the
 largest \textit{eigenvalue} of $A^TA$ (or equivalently, of $AA^T$).

\paragraph*{Convex analysis.} 
Given a subset $C \subseteq \mathbb{R}^n$, $i_C$ denotes the
\textit{indicator function} of $C$ defined by $i_C(x) = 0 \text{ if }
x \in C\text{ and }+\infty\text{ otherwise}$.
At times, we will write $i_{x \in C}$ for $i_C(x)$ (to ease notation,
etc.). For example, we will write $i_{z \ge 0}$ for
$i_{\mathbb{R}^n_+}(z)$, etc.  The \textit{orthogonal projector} onto
$C$, is the ``closest-point'' map $\proj_C: \mathbb{R}^n \rightarrow
C, x \mapsto \underset{z \in C}{\text{argmin }}\frac{1}{2}\|z-x\|^2$.
Let $f : \mathbb{R}^n \rightarrow (-\infty, +\infty]$ be a
  convex function. The \textit{effective domain} of $f$, denoted
  $dom(f)$, is defined as
$dom(f) := \{x \in \mathbb{R}^n | f(x) < +\infty\}$.
 If $dom(f) \ne \emptyset$ then we say $f$ is \textit{proper}.
The \textit{subdifferential} of $f$ at a point $x \in \mathbb{R}^n$ is
defined by
$\partial f(x) := \{v \in \mathbb{R}^n | f(z)  \ge f(x) + \langle v, z -
x\rangle, \forall z \in \mathbb{R}^n\}$.
If $f$ is convex, its
\textit{proximal operator} is the function $\prox_f: \mathbb{R}^n
\rightarrow \mathbb{R}^n$ defined by $\prox_f(x): = \underset{z \in
  \mathbb{R}^n}{\text{argmin }}\frac{1}{2}\|z
  - x\|^2 + f(z)$.

We recommend \cite{rockafellar1997convex,combettes2011proximal} for a
more detailed exposition on convex analysis and its use in
modern optimization theory and practice.

\section{Prior work}
\label{sec:related_work}
Here, we present a selection of algorithms that is representative of the
efforts that have been made in the literature to compute Nash
$\epsilon$-equilibria for two-person zero-sum games with incomplete
information like Texas Hold'em Poker, etc.
It should be noted that the class of games considered
here (sequential games with incomplete information), the LCPs
\eqref{eq:primal_pb} and \eqref{eq:dual_pb} are exceedingly larger
than what state-of-the-art LCP and interior-point solvers can
handle (see \cite{hoda2010smoothing,gilpinfirst}).

\subsection{Regret minimization}
CFR (CounterFactual Regret minimization) \cite{zinkevich2008regret},
Monte Carlo CFR \cite{lanctot2009monte}, and CFR+
\cite{Bowling09012015}, by their large popularity, have become the
definitive state-of-the-art, and are particularly useful in
many-player games, since convex-analytical methods cannot help much in
such games. Also, they can be shown to converge to a Nash-equilibrium
provided each player uses a CFR scheme to play the game
\cite{zinkevich2008regret}, but have a much weaker
convergence theory. For example, \cite{lanctot2009monte} showed that
such schemes have a prohibitive running time of
$\mathcal{O}(1/\epsilon^2)$ to produce a Nash $\epsilon$-equilibrium.

\subsection{First-order methods}
In \cite{hoda2010smoothing}, a nested iterative procedure using the
Excessive Gap Technique (EGT) \cite{nesterov2005excessive} (EGT and
precursors are well-known to the signal-processing community
\cite{NESTA}) was used
to solve the equilibrium problem \eqref{eq:gilpin}.
The authors reported a $\mathcal{O}(1/\epsilon)$ convergence rate
(which derives from the general EGT theory) for the outer-most
iteration loop.
\cite{gilpinfirst} proposed a modified version of the techniques in
\cite{hoda2010smoothing} and  proved a $\mathcal{O}\left(\left(\|A\| /
\delta\right) \log \left(1 / \epsilon\right)\right)$ convergence rate in
terms of the number of calls made to a first-order oracle. Here
$\delta = \delta(A, E_1, E_2, e_1, e_2) > 0$ is a certain
\textit{condition number} for the game. The crux of their technique was to
observe that \eqref{eq:gilpin} can further be written a the minimization of
the duality gap function $G(x, y)$ (defined in \eqref{eq:dgap})
for the game\footnote{The minimizers of $G$ are precisely the
  equilibria of the game.}, viz
\begin{eqnarray}
\mathrm{minimize}\{G(x,y)|(x,y) \in Q_1 \times Q_2\},
\end{eqnarray}
and then show there exists a scalar
$\delta > 0$ such that for any pair of realization plans $(x, y) \in Q_1 \times Q_2$,
\begin{eqnarray}
\text{``distance between }(x, y)\text{ and set of
equilibria'' } \le G(x, y)/\delta.
\end{eqnarray}
Their
algorithm is then derived by iteratively applying Nesterov smoothing
\cite{nesterov2005a}
with a geometrically decreasing sequence of tolerance levels
$\epsilon_{n+1} = \epsilon_n / \gamma$ (with $\gamma > 1$)
$G$. However, there are a number of issues, most notably: \textit{(a)}
The constant $\delta > 0$ can be arbitrarily small, and so the factor
$\|A\| / \delta$ in the $\mathcal{O}\left(\left(\|A\| / \delta\right)
\log\left(1 / \epsilon\right)\right)$ convergence rate can be
arbitrarily large for ill-conditioned games.
\textit{(b)} The reported linear convergence rate is not in terms of
  basic operations (addition, multiplication, matvec, clipping, etc.),
  but in terms of the number of calls to a first-order oracle. Most
  notably, projections onto the polytopes $Q_k$ are computed on
  each iteration, a very hard sub-problem. 

Recently, \cite{kroer2015} proposed accelerations to first-order
methods for computing Nash-equilibria (including those just
discussed), by an appropriate choice of the underlying \textit{Bregman
distance} and the \textit{distance generating function} (essential
ingredients in EGT-type algorithms). These modifications provably
gain a constant factor in the worst-case convergence rate over the
original algorithm. 

\subsection{Primal-dual algorithms}
The primal-dual algorithm first developed in \cite{chambolle2010}, was
proposed in \cite{chambolle2014ergodic} as a way of solving matrix
games on simplexes. Notably, such matrix games on
simplexes are considerably simpler than the games considered
here. Indeed, the authors in \cite{chambolle2014ergodic} used the fact
that computing the orthogonal projection of a point onto a simplex can
be done in linear time as in \cite{duchi2008efficient}. In contrast,
no such efficient algorithm is known nor is likely to exist, for the
polytopes $Q_k$ defined in \eqref{eq:polytope}. 
That notwithstanding, such projections can still be done iteratively
using for example, the algorithm in proposition 4.2 of
\cite{combettes2010dualization} or the algorithms developed in
\cite{tran2015splitting}. Unfortunately, as with any nested iterative
scheme, one would have to solve this sub-problem with finer and finer
precision, rendering the overall solver impractical. One can also cite
\cite{nurminski2008}, in which the authors endeavored an iterative
projection algorithm onto polytopes in outer representation.

Other than the difficult projection sub-problem just discussed,
the duality gap might explode even at points arbitrarily
 close to the set of feasible points, leaving the algorithm with no
 indication whatsoever, on whether progress is being made.


\section{Our contributions}
\label{sec:gsp}
\subsection{Generalized Saddle-point Problem (GSP) equivalent for
  Nash-equilibrium LCPs}
In the next theorem, we show that the LCPs \eqref{eq:primal_pb}
and \eqref{eq:dual_pb} can be
conveniently written as a GSP in the sense of
\cite{he2013accelerating}. The crux of idea is to remove the linear
constraints in the definitions of the strategy polytopes $Q_k$, by
augmenting the payoff matrix to yield an equivalent saddle-point
problem. The result is an equivalent game with unbounded strategy
profiles (nonnegative orthants) with much simpler geometry.
We elaborate the construction in the following theorem.
\begin{theorem}
Define two proper closed convex functions
  \begin{eqnarray}
    \left.
    \begin{aligned}
      g_1: \mathbb{R}^{n_2} &\times \mathbb{R}^{l_1} \rightarrow
      (-\infty, +\infty], \hspace{1em} g_1(y, p) :=
        i_{y \ge 0} + \langle e_1,p\rangle\\
        g_2: \mathbb{R}^{n_1} &\times \mathbb{R}^{l_2} \rightarrow
        (-\infty, +\infty],\hspace{1em} g_2(x, q) :=
          i_{x \ge 0} + \langle e_2, q\rangle
    \end{aligned}
    \right\}
    \label{eq:things}
  \end{eqnarray}
Also define two bilinear forms $\Psi_1$, $\Psi_2: \mathbb{R}^{n_2}
\times \mathbb{R}^{l_1} \times \mathbb{R}^{n_1} \times
\mathbb{R}^{l_2} \rightarrow \mathbb{R}$ by letting
\begin{equation}
  \begin{split}  
    K :=
    \left[
      \begin{array}{cc}
        A & -E_1^T \\
        E_2 & 0
      \end{array}
      \right],\hspace{.2em}
    \Psi_1(y, p, x, q)
    := \left\langle \begin{bmatrix}x\\q\end{bmatrix},
      K\begin{bmatrix}y\\p\end{bmatrix}\right\rangle, 
\end{split}
\end{equation}
with $\Psi_2 = -\Psi_1$, and define the functions $\hat{\Psi}_1$,
$\hat{\Psi}_2
\mathbb{R}^{n_2} \times \mathbb{R}^{l_1} \times \mathbb{R}^{n_1}
\times \mathbb{R}^{l_2} \rightarrow (-\infty, +\infty]$ by
\begin{eqnarray}
  \begin{aligned}
    \hat{\Psi}_1(y, p, x, q) :=\begin{cases}
    \Psi_1(y, p, x, q)+ g_1(y, p), &\mbox{ if }y \ge 0,\\
    +\infty, &\mbox{ otherwise}\end{cases}\\
    \hat{\Psi}_2(y, p, x, q) :=\begin{cases}
    \Psi_2(y, p, x, q)+ g_2(x, q), &\mbox{ if }x \ge 0,\\
    +\infty, &\mbox{ otherwise.}\end{cases}
  \end{aligned}
\end{eqnarray}
Finally, define the sets $S_1 := \mathbb{R}^{n_2}_+ \times
\mathbb{R}^{l_1}$ and $S_2 := \mathbb{R}^{n_1}_+ \times
\mathbb{R}^{l_2}$, and consider the GSP($\Psi_1$, $\Psi_2$, $g_1$,
$g_2$): Find a quadruplet $(y^*,p^*, x^*, q^*) \in S_1 \times S_2$
s.t $\forall (y,p, x, q) \in S_1 \times S_2$, we have
\begin{eqnarray}
  \begin{split}
    &\hat{\Psi}_1(y^*, p^*, x^*, q^*) \le \hat{\Psi}_1(y, p, x^*,
    q^*),\hspace{1em}and\\
    &\hat{\Psi}_2(y^*, p^*, x^*, q^*)
    \le \hat{\Psi}_2(y^*, p^*, x, q).
  \label{eq:unconstrained_pb}
\end{split}
\end{eqnarray}
\label{thm:pd}
Then GSP($\Psi_1$,
  $\Psi_2$, $g_1$, $g_2$) is equivalent to the LCPs
  \eqref{eq:primal_pb} and \eqref{eq:dual_pb}, i.e
a quadruplet $(y^*,p^*, x^*, q^*) \in \mathbb{R}^{n_2}
  \times \mathbb{R}^{l_1} \times \mathbb{R}^{n_1} \times
  \mathbb{R}^{l_2}$ solves the LCPs
  \eqref{eq:primal_pb} and \eqref{eq:dual_pb} iff it solves
  GSP($\Psi_1$, $\Psi_2$, $g_1$, $g_2$). 
  \label{thm:pd}
\end{theorem}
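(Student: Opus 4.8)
The plan is to exploit the fact that each of the two inner sub-problems defining the GSP is affine in its free block of variables, reducing each to a linear program over a product of a nonnegative orthant with a free Euclidean space. I would then read off the optimality conditions and show they coincide exactly with the feasibility/complementarity system \eqref{eq:feasibility} together with the membership $(x^*,y^*) \in Q_1 \times Q_2$, after which Theorem 3.14 of \cite{vonequilibrium} closes the loop.

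First I would expand the bilinear forms. Writing $K[y;p] = [Ay - E_1^Tp;\, E_2 y]$ and pairing against $[x;q]$ gives $\Psi_1(y,p,x,q) = \langle x, Ay\rangle - \langle x, E_1^Tp\rangle + \langle q, E_2 y\rangle$, with $\Psi_2 = -\Psi_1$. Substituting into $\hat{\Psi}_1,\hat{\Psi}_2$ and collecting terms (the indicator $i_{y\ge 0}$ being absorbed into the constraint $y\ge 0$ on $S_1$, and likewise for $x$), the first GSP inequality asserts that $(y^*,p^*)$ minimizes over $S_1$ the affine map
\[
(y, p) \mapsto \langle A^Tx^* + E_2^Tq^*,\, y\rangle + \langle e_1 - E_1 x^*,\, p\rangle,
\]
while the second asserts that $(x^*,q^*)$ minimizes over $S_2$ the affine map
\[
(x, q) \mapsto \langle -Ay^* + E_1^Tp^*,\, x\rangle + \langle e_2 - E_2 y^*,\, q\rangle.
\]

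Next I would characterize when such affine functionals attain their infimum on $S_k$. Over the free block ($p$, resp.\ $q$), any nonzero linear coefficient drives the value to $-\infty$, so finiteness forces $E_1 x^* = e_1$ (resp.\ $E_2 y^* = e_2$); these are precisely the equality constraints defining $Q_1$ and $Q_2$, which here arise not as a priori assumptions but as boundedness requirements on the free dual variables. Over the orthant block ($y\ge 0$, resp.\ $x\ge 0$) the functional is bounded below iff its coefficient is nonnegative, yielding $A^Tx^* + E_2^Tq^* \ge 0$ (resp.\ $-Ay^* + E_1^Tp^* \ge 0$); and since the infimum then equals $0$, attainment at $y^*$ (resp.\ $x^*$) is equivalent to the complementary-slackness identities $\langle y^*, A^Tx^* + E_2^Tq^*\rangle = 0$ and $\langle x^*, -Ay^* + E_1^Tp^*\rangle = 0$. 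Adjoining $y^*\ge 0$, $x^*\ge 0$ (from $S_1, S_2$) recovers exactly $(x^*,y^*)\in Q_1\times Q_2$ together with system \eqref{eq:feasibility}. I would present this as a genuine two-sided equivalence: the forward direction extracts the constraints from optimality as above, while the converse merely verifies that under \eqref{eq:feasibility} both affine functionals equal $0$ at the candidate point and are everywhere $\ge 0$ on $S_k$, so the GSP inequalities hold.

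Finally, I would invoke Theorem 3.14 of \cite{vonequilibrium}, quoted here as \eqref{eq:feasibility}, which states that a pair $(x^*,y^*)\in Q_1\times Q_2$ solves the LCPs \eqref{eq:primal_pb}--\eqref{eq:dual_pb} iff there exist $p^*,q^*$ satisfying \eqref{eq:feasibility}; chaining this with the equivalence just established finishes the proof. The main obstacle is the careful handling of the $+\infty$ values and the finiteness-and-attainment analysis of the two linear programs — in particular articulating cleanly that the equality constraints of $Q_1,Q_2$ are \emph{forced} by requiring the free dual variables not to unbound the objective. Everything else reduces to routine term collection and sign tracking.
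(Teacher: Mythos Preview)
Your argument is correct, but it follows a different path from the paper's own proof.

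The paper does not analyze the GSP optimality conditions directly. Instead, it rewrites the \emph{unconstrained} objective $a(y,p)$ of the primal LCP \eqref{eq:primal_pb} (with all constraints folded in as indicator functions) via Lagrangian dualization over auxiliary variables $(x',q')$, obtaining $a(y,p)=g_1(y,p)-\phi_2(y,p)$ with $\phi_2(y,p):=\min_{x',q'}\hat{\Psi}_2(y,p,x',q')$; it does the symmetric computation for the dual LCP objective $b(x,q)$; and then it checks that $a(y,p)-b(x,q)$ equals $[\hat{\Psi}_1(y,p,x,q)-\phi_1(x,q)]+[\hat{\Psi}_2(y,p,x,q)-\phi_2(y,p)]$, which is precisely the GSP duality gap. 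Equivalence of solutions then follows because both gaps vanish exactly at solutions.

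Your route---reading off that each GSP sub-problem is an affine minimization over $\mathbb{R}^{n_k}_+\times\mathbb{R}^{l_{3-k}}$, extracting $E_1x^*=e_1$, $E_2y^*=e_2$ from boundedness in the free block, the sign constraints from boundedness over the orthant, complementary slackness from attainment, and then invoking Theorem~3.14 of \cite{vonequilibrium}---is more elementary and avoids manipulating extended-real-valued objectives. The paper's route buys something extra, however: it shows the two duality gaps coincide \emph{pointwise} on $S_1\times S_2$, not merely that their zero sets agree. This stronger identity is directly relevant to the approximate-equilibrium notions (Definitions~\ref{thm:approx_nash} and \ref{thm:cool_notion}) used later in the paper, whereas your KKT matching only certifies the exact case.
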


\begin{proof}
It suffices to show that at any point $(y, p, x,
q) \in S_1 \times S_2$, the duality gap between the primal
LCP \eqref{eq:primal_pb} and the dual LCP \eqref{eq:dual_pb} equals
the duality gap of GSP($\Psi_1$, $\Psi_2$, $g_1$, $g_2$).
Indeed, the unconstrained objective in \eqref{eq:primal_pb}, say
$a(x,y)$, can be computed as
\begin{eqnarray*}
  \begin{aligned}
    &a(y,p) = \langle e_1,p\rangle + i_{y\ge 0} + i_{-Ay + E_1^Tp \ge 0} +
    i_{E_2y = e_2}\\
    &= g_1(y,p) + \underset{x' \geq
      0}{\text{max}}\text{ }\langle x',Ay - E_1^Tp\rangle +
    \underset{q'}{\text{max}}\text{ }\langle q',E_2y - e_2\rangle\\
    &= g_1(y,p) + \underset{x',
      q'}{\text{max}}\text{ }\langle x',Ay\rangle - \langle x',
    E_1^Tp\rangle + \langle q',E_2y\rangle\\
    &\hspace{10em}-
    (i_{x' \ge 0} + \langle e_2,q\rangle)\\
    &= g_1(y,p)
      - \underset{x',q'}{\text{min}}\text{ }\Psi_2(y, p, x', q') + g_2(x',
      q')\\
      &= g_1(y,p)
      - \underbrace{\underset{x',q'}{\text{min}}\text{
        }\hat{\Psi}_2(y, p, x', q')}_{\phi_2(y,p)}
      = g_1(y, p) - \phi_2(y, p).
  \end{aligned}
  \label{eq:a}
\end{eqnarray*}
Similarly, the unconstrained objective, say
$b(x, q)$, in the dual LCP \eqref{eq:dual_pb} writes
\begin{eqnarray*}
  \begin{aligned}
&b(x, q) = 
-\langle q, e_2\rangle -i_{x \ge 0} - i_{A^Tx+E_2^Tq \ge 0} -
 i_{E_1x = e_1}\\
 &= -g_2(x, q) + \underset{y' \geq
   0}{\text{min}}\text{ }\langle y', A^Tx + E_2^Tq\rangle +
 \underset{p'}{\text{min}}\text{ }\langle p', e_1-E_1x\rangle\\
    &= -g_2(x, q)
 +\underset{y',p'}{\text{min}}\text{ }\Psi_1(y', p', x, q) +
 g_1(y', p')\\
& = -g_2(x, q) +
 \underbrace{\underset{y',p'}{\text{min}}\text{ }\hat{\Psi}_1(y', p',
   x, q)}_{\phi_1(x, q)} = -g_2(x, q) + \phi_1(x, q). 
   \end{aligned}
\end{eqnarray*}
Thus, noting that $-\infty < \phi_1(x, q), \phi_2(y, p) < +\infty$
(so that all the operations below are valid),
one computes the duality gap between the primal LCP
\eqref{eq:primal_pb} and dual the LCP \eqref{eq:dual_pb} at $(y, p, x, q)$ as
\begin{eqnarray*}
  \begin{split}
    &a(y, p) - b(x, q) = g_1(y, p) - \phi_2(y, p) + g_2(x, q) - \phi_1(x,
  q) \\
  &= \Psi_1(y, p, x, q) +  g_1(y, p) - \phi_2(y, p) + \Psi_2(y, p, x,
  q) + g_2(x, q) \\
  &\hspace{2em}- \phi_1(x, q)\\
  &= \hat{\Psi}_1(y, p, x, q) - \phi_1(x,
  q) + \hat{\Psi}_2(y, p, x, q) - \phi_2(y, p)\\
  &= \text{duality gap of GSP}(\Psi_1, \Psi_2,
  g_1, g_2) \text{ at }(y, p, x, q),
  \end{split}
\end{eqnarray*}
where the second equality follows from 
$\Psi_1 + \Psi_2 := 0$.
\end{proof}

By Theorem \ref{thm:pd}, solving for a Nash-equilibrium for
the game is equivalent to solving the GSP
\eqref{eq:unconstrained_pb}, which as it turns out, is simpler
conceptually (e.g, we no longer need to compute the
complicated orthogonal projections $\proj_{Q_k}$). The rest of the
paper will be devoted to developing an algorithm for solving the latter.


\subsection{The proposed algorithm}
We now derive the algorithm (Algorithm \ref{Tab:algo}) for computing
Nash $(\epsilon, 0)$-equilibria and establish its theoretical
properties. The algorithm, which emerges as a synthesis of Theorem
\ref{thm:pd} above and ideas from \cite{he2013accelerating},  is
numerically stable and performs only basic iterations (i.e matvec
multiplications, clipping, etc., and no calls to external first-order
oracles, no matrix inversions, etc.).
\label{sec:algo}
\begin{definition}
  Given $\epsilon > 0$ and a function $f:\mathbb{R}^n
  \rightarrow (-\infty,+\infty]$, the $\epsilon$-enlarged
  subdifferential (or
  $\epsilon$-subdifferential, for short) of $f$ is the set-valued
  function defined by
  \begin{eqnarray}
\partial_\epsilon f(x):= \{v \in \mathbb{R}^n | f(z)
\ge f(x) + \langle v, z - x\rangle - \epsilon,\forall z \in
\mathbb{R}^n\}.
\end{eqnarray}
\end{definition}
The idea behind $\epsilon$-subdifferentials is the following. Say we wish
to minimize a convex function $f$. Replace the usual
necessary and sufficient condition ``$0 \in \partial f(x)$'' for the
optimality of $x$ with the weaker condition ``$\partial_\epsilon f(x)$
contains a sufficiently small vector $v$''. 
This approximation concept for subdifferentials yields yet another notion
of approximate Nash-equilibrium.
the following concept of approximate Nash-equilibria (refer to
\cite{he2013accelerating}), namely

\begin{definition}[\textbf{Nash $(\epsilon_1,\epsilon_2)$-equilibria}]
Given tolerance levels $\epsilon_1, \epsilon_2 > 0$, a Nash
$(\epsilon_1,\epsilon_2)$-equilibrium for the GSP \eqref{eq:unconstrained_pb}
is any quadruplet $(x^*, y^*, x^*, q^*)$ for which
there exists a perturbation vector $v^*$ such that
$\|v^*\| \le \epsilon_1$ and $v^* \in
\partial_{\epsilon_2}[\hat{\Psi}_1(., ., x^*, q^*) +
  \hat{\Psi}_2(y^*, p^*, ., .)](y^*,p^*,x^*,q^*)$.
Such a vector $v^*$ is called a Nash $(\epsilon_1,
\epsilon_2)$-residual at the point $(x^*,
y^*, x^*, q^*)$.
\label{thm:cool_notion}
\end{definition}

The above definition is a generalization of the notion of
Nash-equilibria since: \textit{(a)} exact Nash-equilibria correspond
to Nash $(0,0)$-equilibria, and \textit{(b)} Nash
$\epsilon$-equilibria (in the sense of Definition
\ref{thm:approx_nash}) correspond to Nash $(0,\epsilon)$-equilibria.

\begin{algorithm}
\caption{Primal-dual algorithm for computing Nash $(\epsilon,
  0)$-equilibria in two-person zero-sum sequential games}
\label{Tab:algo}
\begin{algorithmic}[1]
\Require $\epsilon > 0$; $(y^{(0)},p^{(0)},x^{(0)},q^{(0)}) \in \mathbb{R}^{n_2}
  \times \mathbb{R}^{l_1} \times \mathbb{R}^{n_1} \times
  \mathbb{R}^{l_2}$.
\Ensure A Nash $(\epsilon,0)$-equilibrium
$({y^*},{p^*},{x^*},{q^*}) \in S_1 \times S_2$ for
the GSP \eqref{eq:unconstrained_pb}.
\State  \textbf{Initialize:} $\lambda \leftarrow 1/\|K\|$, ${v}^{(0)}
\leftarrow 0$, $k \leftarrow 0$
\While{ $ k = 0$ or  $\frac{1}{k\lambda}\|v^{(k)}\| \ge \epsilon$}
\State $y^{(k + 1)} \leftarrow (y^{(k)} - \lambda (A^Tx^{(k)} +
E_2^Tq^{(k)}))_+$, \hspace{.5em}$p^{(k+1)} \leftarrow p^{(k)} -
\lambda(e_1-E_1x^{(k)})$
\State $x^{(k + 1)} \leftarrow (x^{(k)} + \lambda (Ay^{(k+1)} -
E_1^Tp^{(k+1)}))_+$, \hspace{.5em}$\Delta x^{(k+1)} \leftarrow
x^{(k+1)}-x^{(k)}$
\State $\Delta q^{(k+1)} \leftarrow \lambda (E_2y -
e_2)$, \hspace{.5em}$q^{(k+1)} \leftarrow q^{(k)} + \Delta q^{(k+1)}$
\State $y^{(k+1)} \leftarrow y^{(k+1)} - \lambda (A^T\Delta x^{(k+1)}
+ E_2^T\Delta q^{(k+1)})$, \hspace{.5em}$\Delta y^{(k+1)} \leftarrow
y^{(k+1)}-y^{(k)}$
\State $p^{(k+1)} \leftarrow p^{(k+1)} + \lambda E_1\Delta x^{(k+1)}$,
\hspace{.5em} $\Delta p^{(k+1)} \leftarrow p^{(k+1)}-p^{(k)}$
\State ${v}^{(k+1)} \leftarrow {v}^{(k)} + (\Delta
y^{(k+1)},\Delta p^{(k+1)},\Delta x^{(k+1)},\Delta q^{(k+1)})$ 
\State $k \leftarrow k + 1$
\EndWhile
\end{algorithmic}
\end{algorithm}

\begin{theorem}[Ergodic / Ces\`ario $\mathcal{O}(1/\epsilon)$ convergence]
Let $d_0$ be the euclidean distance between the starting point
$(y^{(0)},p^{(0)},x^{(0)},q^{(0)})$ of Algorithm \ref{Tab:algo} and the
set of equilibria for the GSP \eqref{eq:unconstrained_pb}.
Then given any $\epsilon > 0$, there exists an index
$k_0 \le \frac{2d_0\|K\|}{\epsilon}$ such that after $k_0$ iterations
the algorithm produces a quadruplet
$(y^{k_0},p^{k_0},x^{k_0},q^{k_0})$ and a vector $v^{k_0}$ such that
$\|v_a^{k_0}\| \le \epsilon$ and $v_a^{k_0} \in
\partial[\hat{\Psi}_1(., ., x^{k_0}, q^{k_0}) +
  \hat{\Psi}_2(y^{k_0}, p^{k_0}, ., .)](y^{k_0},p^{k_0},x^{k_0},q^{k_0})$,
where
\begin{eqnarray}
\label{eq:v_a}
v_a^{(k_0)} := \frac{1}{k\lambda}v^{(k_0)}.
\end{eqnarray}
Thus Algorithm \ref{Tab:algo} outputs a Nash
$(\epsilon,0)$-equilibrium for the GSP \eqref{eq:unconstrained_pb}
in at most $\frac{2d_0\|K\|}{\epsilon}$ iterations.
\end{theorem}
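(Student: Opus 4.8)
The plan is to read Algorithm~\ref{Tab:algo} as a customized proximal-point / prediction--correction scheme (in the sense of \cite{he2013accelerating}) for the monotone variational inequality (VI) attached to GSP$(\Psi_1,\Psi_2,g_1,g_2)$, and then to extract the two halves of the statement --- the bound on $\|v_a^{(k_0)}\|$ and the exact subdifferential inclusion --- from the per-iteration optimality conditions together with a Fej\'er estimate. Writing $w=(y,p,x,q)$, the saddle conditions \eqref{eq:unconstrained_pb} are equivalent to finding $w^{*}\in S_1\times S_2$ with $0\in F(w^{*})+\partial\theta(w^{*})$, where $F$ is the affine operator induced by the coupling $K,-K^{T}$ and $\theta$ gathers the linear terms $\langle e_1,p\rangle+\langle e_2,q\rangle$ together with the orthant indicators $i_{y\ge0}+i_{x\ge0}$. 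Because the coupling block built from $K$ and $-K^{T}$ is skew-symmetric, $F$ is monotone and $F+\partial\theta$ is maximally monotone; this is what will let me average subgradients and take a nearest-point projection onto the (closed convex) solution set.

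First I would expand lines~3--7 and check that one sweep is a forward--backward predictor step for $F+\partial\theta$ --- the maps $(\cdot)_+$ being the proximal operators of $i_{y\ge0}$ and $i_{x\ge0}$ --- followed by the linear correction prescribed in \cite{he2013accelerating}, with preconditioner $M$ whose diagonal blocks are $\tfrac1\lambda\Id$ and whose off-diagonal blocks are assembled from $A,E_1,E_2$. The choice $\lambda=1/\|K\|$ is exactly the boundary case of the step-size condition $\lambda^{2}\|K\|^{2}\le1$ that makes $M\succeq0$, and it is the correction step that turns the resulting $M$-contraction into the Euclidean estimate $\|w^{(k)}-w^{*}\|\le\|w^{(0)}-w^{*}\|$ for every equilibrium $w^{*}$; taking $w^{*}$ to be the projection of $w^{(0)}$ onto the solution set gives $\|w^{(k)}-w^{*}\|\le d_0$. (Confirming that the Euclidean, and not merely the $M$-seminorm, contraction holds at this boundary step size is one technical point to be settled here.)

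The norm bound is then immediate once I exploit the telescoping built into line~8: since each $\Delta$ is the total per-coordinate increment, $v^{(k)}=\sum_{j=0}^{k-1}(w^{(j+1)}-w^{(j)})=w^{(k)}-w^{(0)}$, so $v_a^{(k)}=\tfrac1{k\lambda}(w^{(k)}-w^{(0)})$ is the Ces\`aro average of the per-step residuals $\tfrac1\lambda(w^{(j+1)}-w^{(j)})$. The triangle inequality and the Fej\'er estimate give $\|w^{(k)}-w^{(0)}\|\le\|w^{(k)}-w^{*}\|+\|w^{*}-w^{(0)}\|\le2d_0$, whence $\|v_a^{(k)}\|\le 2d_0/(k\lambda)=2d_0\|K\|/k$; requiring the right-hand side to fall below $\epsilon$ forces the while-loop to terminate at some $k_0\le 2d_0\|K\|/\epsilon$, which is exactly the advertised count.

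The hard part will be the exact inclusion $v_a^{(k_0)}\in\partial[\hat\Psi_1(\cdot,\cdot,x^{k_0},q^{k_0})+\hat\Psi_2(y^{k_0},p^{k_0},\cdot,\cdot)](w^{k_0})$, i.e.\ that the enlargement is genuinely $0$ and not merely $\mathcal{O}(1/k)$. With the coupling frozen, that functional is a linear map plus the indicators $i_{y\ge0}+i_{x\ge0}$, so its subdifferential splits coordinatewise: the $p$- and $q$-blocks must equal the linear coefficients $e_1-E_1x^{k_0}$ and $e_2-E_2y^{k_0}$ exactly, while the $y$- and $x$-blocks must lie in $\{A^{T}x^{k_0}+E_2^{T}q^{k_0}\}+N_{\mathbb{R}^{n_2}_+}(y^{k_0})$ and $\{-Ay^{k_0}+E_1^{T}p^{k_0}\}+N_{\mathbb{R}^{n_1}_+}(x^{k_0})$. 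The linear blocks close up exactly because the corrector in lines~6--7 makes the increments $\Delta p^{(j)}=\mp\lambda(e_1-E_1x^{(j)})$ and $\Delta q^{(j)}=\pm\lambda(E_2y^{(j)}-e_2)$, so their Ces\`aro averages reproduce (up to the residual's sign convention) the constraint residuals at the averaged iterate --- which is where ``ergodic'' enters and why $x^{k_0},y^{k_0}$ must be read as running averages. The genuinely delicate point is the two orthant blocks: each prox step supplies an exact normal-cone vector $u^{(j)}\le0$ with complementary slackness $\langle u^{(j)},x^{(j)}\rangle=0$, and averaging preserves $\bar u\le0$ but only yields $\langle\bar u,\bar x\rangle\le0$. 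Upgrading this to the exact complementarity $\langle\bar u,\bar x\rangle=0$ --- so that $\bar u$ is a true normal-cone element at $\bar x$ rather than an $\mathcal{O}(1/k)$-slack one --- is the crux, and is where the precise algebra of the He--Yuan correction must be used in full.
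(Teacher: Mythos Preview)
Your proposal takes a very different route from the paper. The paper's own proof is essentially a one-line reduction: it checks that the quadruplet $(\Psi_1,\Psi_2,g_1,g_2)$ satisfies assumptions B.1--B.3, B.5, B.6 of \cite{he2013accelerating} with $L_{xx}=L_{yy}=0$, $L_{xy}=L_{yx}=\|K\|$, computes $\prox_{\lambda g_j}(a,b)=((a)_+,b-\lambda e_j)$ in closed form, recognizes Algorithm~\ref{Tab:algo} as \cite[Algorithm T-BD]{he2013accelerating} with the parameter choices $\sigma=1$, $\sigma_x=\sigma_y=0$, $\lambda=\lambda_{xy}=1/\|K\|$, and then invokes \cite[Theorem~4.2]{he2013accelerating} verbatim. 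Nothing more is proved; all the analytical content (Fej\'er monotonicity, the residual bound, and the exact subdifferential inclusion) is outsourced to that reference.

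You are instead attempting to reprove the cited theorem from scratch. That is a legitimate and more self-contained programme, and your outline correctly identifies the skeleton: the skew-symmetric coupling gives monotonicity, the telescoping in line~8 yields $v^{(k)}=w^{(k)}-w^{(0)}$, and Fej\'er monotonicity plus the triangle inequality gives the $2d_0\|K\|/k$ bound on $\|v_a^{(k)}\|$. This part is fine.

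However, your proposal has two genuine loose ends that you yourself flag but do not close. First, you note that at the boundary step size $\lambda=1/\|K\|$ one must check that the contraction is Euclidean and not merely in the $M$-seminorm; this is not automatic and requires the specific structure of the correction step. Second, and more seriously, you correctly isolate the ``hard part'' --- the exact (not $\epsilon$-enlarged) subdifferential inclusion --- and reduce it to showing exact complementarity $\langle\bar u,\bar x\rangle=0$ for the averaged normal-cone vectors, but then stop, saying only that ``the precise algebra of the He--Yuan correction must be used in full.'' That is not a proof; it is a pointer to where a proof would go. Relatedly, your remark that ``$x^{k_0},y^{k_0}$ must be read as running averages'' sits uneasily with the theorem statement and the algorithm, which output the actual iterate $w^{(k_0)}$, not an average; whatever averaging is happening is in the residual $v_a^{(k_0)}$, not in the point at which the subdifferential is evaluated, and your argument needs to respect that distinction. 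Until these two points are resolved, the proposal is an outline rather than a proof --- whereas the paper sidesteps all of this by citing \cite[Theorem~4.2]{he2013accelerating} directly.
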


\begin{proof}
It is clear that the quadruplet $(\Psi_1, \Psi_2, g_1, g_2)$
satisfies assumptions B.1, B.2, B.3, B.5, and B.6 of
\cite{he2013accelerating} with $L_{xx} = L_{yy} = 0$ and $L_{xy} =
L_{yx} = \|K\|$. Now, one easily computes the proximal operator of
$g_j$ in closed-form as $\prox_{\lambda g_j}(a, b) \equiv ((a)_+,
b - \lambda e_j)$. With all these ingredients in place, Algorithm
\ref{Tab:algo} is then obtained from \cite[Algorithm
  T-BD]{he2013accelerating} applied on the GSP
\eqref{eq:unconstrained_pb} with the choice of parameters: $\sigma = 1
\in (0, 1]$, $\sigma_x = \sigma_y = 0 \in [0, \sigma)$,
    $\lambda_{xy} := \frac{1}{\sigma L_{xy}}\sqrt{(\sigma^2 -
        \sigma_x^2)(\sigma^2 - \sigma_y^2)} = \sigma / \|K\| =
      1/\|K\|$, and $\lambda = \lambda_{xy} \in (0,
      \lambda_{xy}]$. The convergence result then follows immediately
  from \cite[Theorem 4.2]{he2013accelerating}.
\end{proof}

\subsection{Practical considerations}
\paragraph*{Efficient computation of $Ay$ and $A^Tx$.}
In Algorithms \ref{Tab:algo}, most of the time is spent
pre-multiplying vectors by $A$ and $A^T$. For \textit{flop-type} Poker
games like Texas Hold'em and  Rhode Island Hold'em,
$A$ (and thus $A^T$ too)  is very big (up $10^{14}$ rows and columns!)
but is sparse and has a rich block-diagonal structure (each block is
itself the Kronecker product of smaller matrices) which can be
carefully exploited, as in \cite{hoda2010smoothing}. Also the sampling
strategies presented in the recent work \cite{kroer2015} (section 6),
for generating unbiased estimates of $Ay$ and $A^Tx$ would readily
convert Algorithm \ref{Tab:algo} into an online and much scalable
solver.

\paragraph*{Computing $\|K\|$.}
A major ingredient in the proposed algorithm is $\|K\|$, the $2$-norm
of the huge matrix $K$. This  can be efficiently computed using the
power iteration. Also since $\|K\|$ is only used in defining the
step-size $\lambda := 1/\|K\|$, it may be possible to avoid computing
$\|K\|$ altogether, and instead use a line-search / backtrack strategy
(see \cite{o2013adaptive}, e.g) for setting $\lambda$.

\paragraph*{Game abstraction.} For many variants of Poker, there has
been extensive research in lossy / lossless abstraction techniques (for
example \cite{gilpin2007} and more recently,
\cite{sandholm2015abstraction,brown2015hierarchical}), wherein
strategically equivalent or not-so-different situations in the game
tree are lumped together. This can drastically reduce the size of the
state space from a player's perspective, and ultimately, the size of
the matrices $A$, $E_1$, and $E_2$, without significantly deviating
much from the true game.

\section{Numerical experiments results}
\label{sec:results}
We now present some proof-of-concept for the algorithm proposed.
Results are presented and commented in Figure \ref{Tab:dgap_curve}.
\begin{remark}
\label{thm:bulletproof}
We have not benchmarked our algorithm against the algorithms proposed in
\cite{nesterov2005a} and Gilpin's et al. \cite{gilpinfirst} because
implementing them from scratch for such games would require us to
compute the complicated projections $\prox_{Q_k}$. We recall that
avoiding these projections was one of the goals of the manuscript.
\end{remark}
\subsection{Basic test-bed: Matrix games on simplexes}
As in \cite{nesterov2005a,chambolle2014ergodic}, we generate a $1000
\times 1000$ random matrix whose entries are uniformly identically
distributed in the closed interval $[-1, 1]$. 
The results of the experiments are shown in Figure
\ref{Tab:dgap_curve}\textit{(a)}.

\subsection{Kuhn Poker, a ``toy'' sequential game} This game is a
simplified form of Poker
developed by Harold W. Kuhn in \cite{kuhn}. It already contains all
the complexities (sequentiality, imperfection of information, etc.) of
a full-blown Poker game like Texas Hold'em, but is simple enough to
serve as a proof-of-concept for the ideas developed in this
manuscript. The deck includes only three playing
cards: a King, Queen, and Jack. One card is dealt to each player, then
the first player must bet or pass, then the second player may bet or
pass. If any player chooses
to bet the opposing player must bet as well ("call") in order to stay
in the round. After both players pass or bet, the player with the
highest card wins the pot. 
The pair of vectors $(x^*, y^*) \in \mathbb{R}^{13 + 13}$ given by
\begin{eqnarray*}
  \begin{split}
    &x^* = [1, .759, .759, 0, .241, 1, .425, .575, 0, .275, 0,
      .275, .725]^T,\\
    &y^* = [1, 1, 0, .667, .333, .667, .333, 1, 0, 0, 1, 0, 1]^T
    \end{split}
\end{eqnarray*}
is a Nash $(10^{-4},0)$-equilibrium computed in 1500 iterations of
Algorithm  \ref{Tab:algo}. The convergence curves are shown
in Fig \ref{Tab:dgap_curve}. One easy checks that this equilibrium is
feasible. Indeed, one computes \\\\
$E_1x^* - e_1 = [4.76 \times 10^{-5}, -1.91 \times 10^{-5}, 5.67
      \times 10^{-5}, 8.23 \times 10^{-6}, 2.90 \times 10^{-5},
      -8.62 \times 10^{-7}, -1.96 \times 10^{-5}]^T$
and
$E_2y^* - e_2 = [-7.04 \times 10^{-7}, 2.27 \times 10^{-6}, -3.29
  \times 10^{-6}, -1.50 \times 10^{-6},
      2.92 \times 10^{-6}, -4.97 \times 10^{-7}, -5.85 \times
      10^{-7}]^T$.\\\\
Finally, one checks that ${x^*}^TAy^* = {-0.05555}$,
 which agrees to 5 d.p with the value of $-1 / 18$ computed
 analytically by H. W. Kuhn in his 1950 paper \cite{kuhn}. The
 evolution of the dual gap and the expected value of
 the game across iterations are shown in Figure \ref{Tab:dgap_curve}.
The results of the experiments are shown in Figure
\ref{Tab:dgap_curve}\textit{(b)}.

\begin{figure}[!htpb]
  \subfigure[$10^3 \times 10^3$ matrix game on simplexes]{
    \includegraphics[width=.5\linewidth]{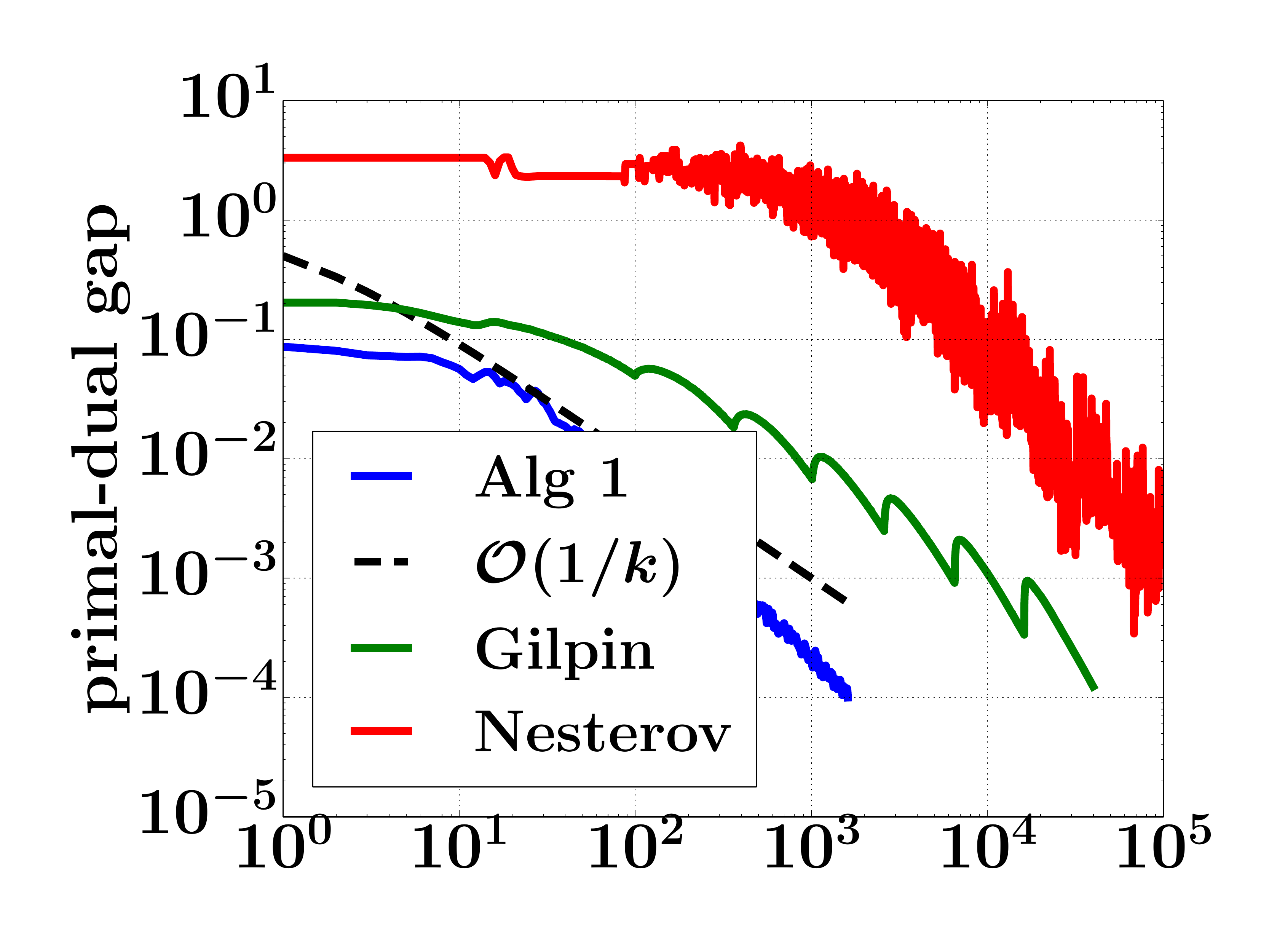}
  }
  \hspace{-2em}
  \subfigure[Kuhn 3-card Poker]{
    \includegraphics[width=.5\linewidth]{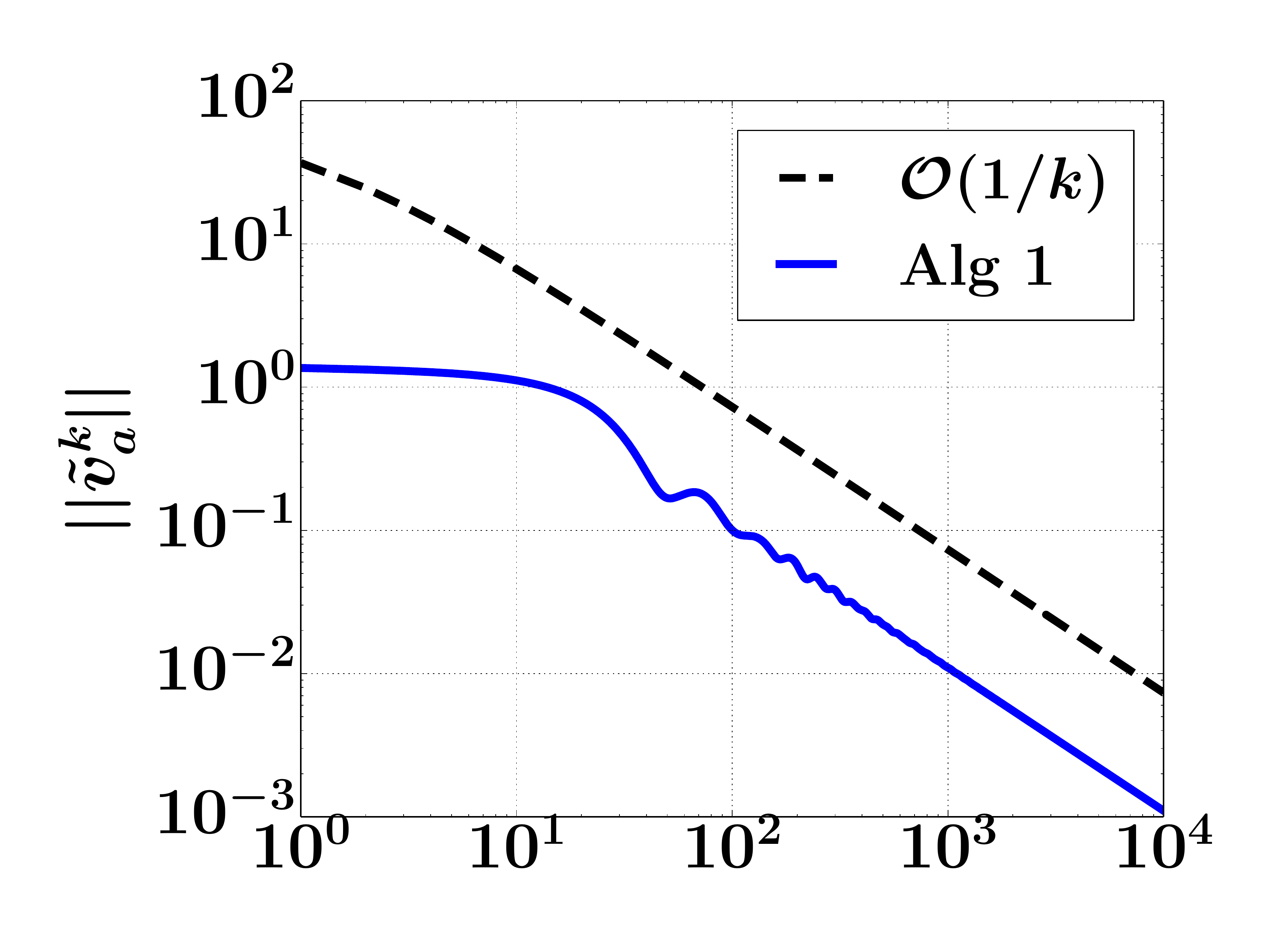}
  }
  \hspace{-.5em}
  \includegraphics[width=.5\linewidth]{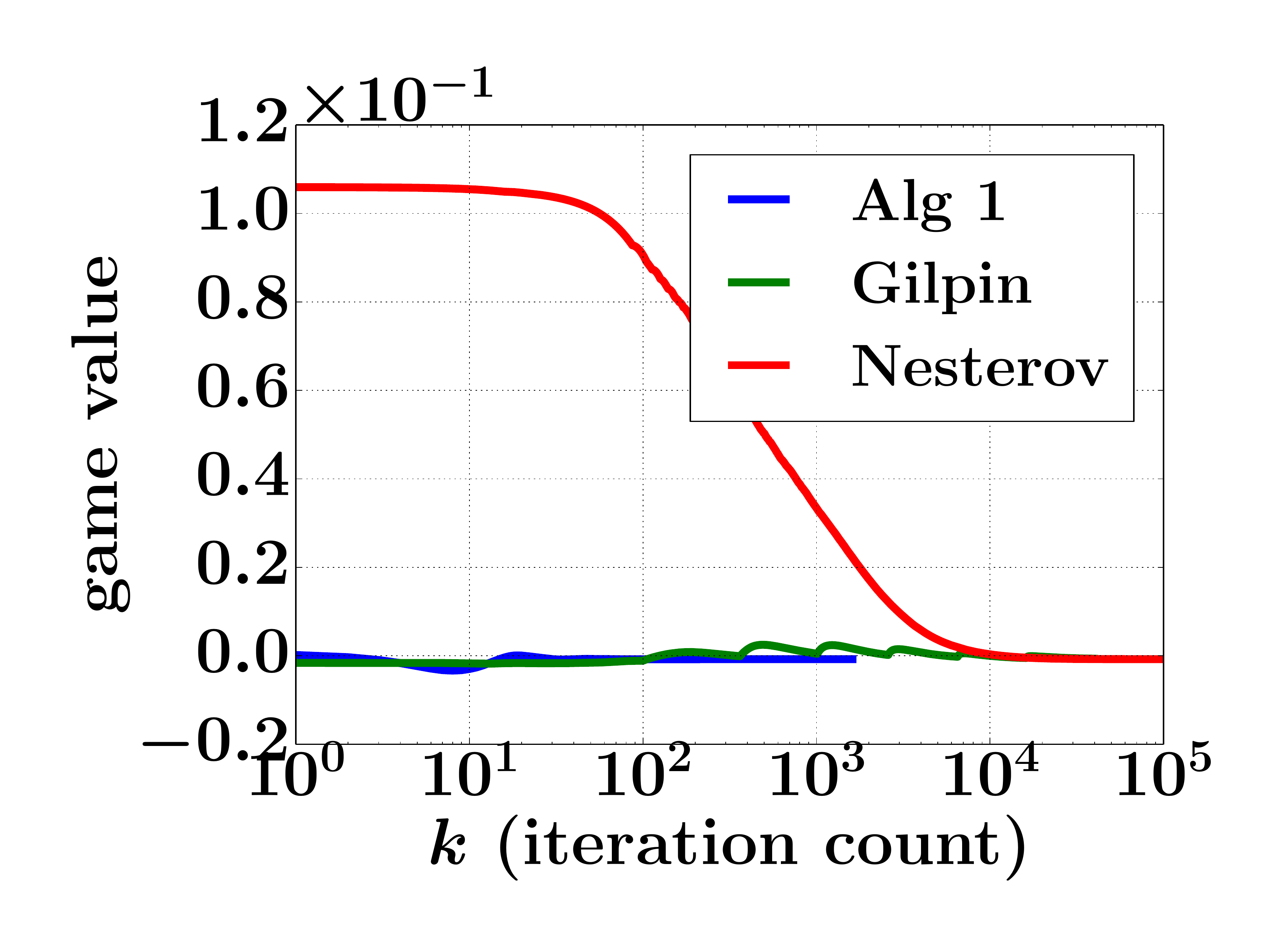}
  \includegraphics[width=.49\linewidth]{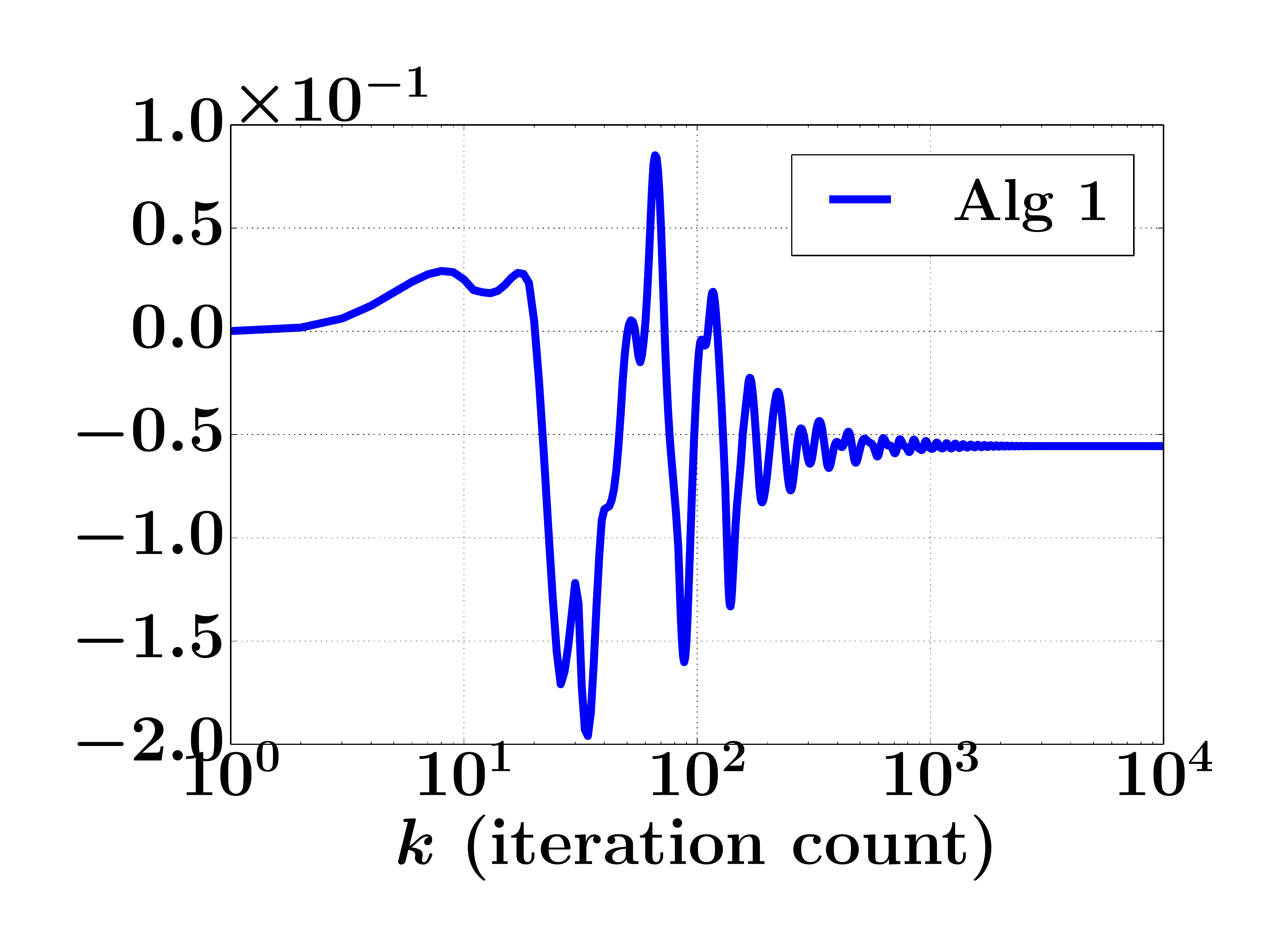}
  \caption{Convergence curves of Algorithm
    \ref{Tab:algo}.  We stress that the algorithms of Nesterov
    \cite{nesterov2005a} and Gilpin \cite{gilpinfirst} are included in
    the plots only indicatively, since this is not meant to be a
    benchmark as already explained (Remark \ref{thm:bulletproof}). In
    \textit{(a)}, the duality gaps
    are computed according to formula \eqref{eq:mg_pd}. One can see
    the linear (i.e exponentially fast) behavior of the algorithm in
    \cite{gilpinfirst}, inbetween consecutive breakpoints on the
    $\epsilon$ grid (though the rate of linear convergence seems
    to by quite close to $1$ here). As expected, the first-order smoothing
    algorithm labelled ``Nesterov'' \cite{nesterov2005a} jitters
    around as the iterations go on because even the smoothed problem
    becomes heavily ill-conditioned
    near solutions. 
  \textit{(b)}: Kuhn Poker. In the top-right plot, we show the
  modified duality gap defined in \eqref{eq:v_a} In both cases, we see
  that the proved convergence rate for our algorithm is empirically
  observed.}
  \label{Tab:dgap_curve}
\end{figure}

\section{Concluding remarks and future work}
Making use of the sequence-form representation
\cite{koller1992complexity,von1996efficient,vonequilibrium}, we have
devised a simple numerically stable primal-dual algorithm for computing
Nash-equilibria in two-person zero-sum sequential games with
incomplete information (like Texas Hold'em, etc.). Our algorithm is
simple to implement, with a low constant cost per iteration, and
enjoys a rigorous convergence theory with a proven
$\mathcal{O}(1/\epsilon)$ convergence in terms of basic operations
(matvec products, clipping, etc.), to a Nash
$(\epsilon,0)$-equilibrium of the game. In future, we plan to run more
experiments on real Poker games to measure the practical power of the
proposed algorithm compared to other competed schemes like CFR and
EGT.

In conclusion, Nash-equilibrium problems are saddle-point
convex-concave problems, and as such, a natural tool
for tackling them would be proximal primal-dual / operator-splitting
algorithms, and we believe such methods will receive more attention
in the algorithmic game theory community in future.







\pagebreak
\bibliographystyle{IEEEbib}
\bibliography{bib}

\end{document}